\title[Event-triggered Safe Bayesian Optimization on Quadcopters]{Event-Triggered Safe Bayesian Optimization on Quadcopters}
\pgfplotsset{compat=1.11}
\pgfplotsset{compat=newest}
\definecolor{RWTHRed}{rgb}{0.8,0.027450980392156862,0.11764705882352941}
\newcommand{\R}{\mathbb{R}}
\newcommand{\N}{\mathbb{N}}
\newcommand{\ie}{i\/.\/e\/.,\/~}
\newcommand{\eg}{e\/.\/g\/.,\/~}
\newcommand{\cf}{cf\/.\/~}
\newcommand{\iid}{i\/.i\/.d\/.\/~}
\newcommand{\wrt}{w\/.r\/.t\/.\/~}
\newcommand*{\ET}{ET\nobreakdash-GP\nobreakdash-UCB}
\newcommand{\fakepar}[1]{\vspace{1mm}\noindent\textbf{#1.}}
\newcommand{\argmax}{\arg\max}
\newcommand{\optvar}{\bm{\theta}}
\newcommand{\obs}{\hat{J}}
\newcommand{\triggerT}{t'}
\newcommand{\timepar}{\xi}
\newcommand{\tvar}{\tau}
\newtheorem{assumption}{Assumption}
\begin{document}

\author{%
 \Name{Antonia Holzapfel\nametag{\thanks{The authors contributed equally to this work.}
  \addtocounter{footnote}{-1}\addtocounter{Hfootnote}{-1}}} \Email{antonia.holzapfel@dsme.rwth-aachen.de}\\
 \Name{Paul Brunzema\nametag{\footnotemark}} \Email{paul.brunzema@dsme.rwth-aachen.de}\\
 \Name{Sebastian Trimpe} \Email{trimpe@dsme.rwth-aachen.de}\\
 \addr Institute for Data Science in Mechanical Engineering, RWTH Aachen University, Germany%
}

\maketitle

\begin{abstract}%
Bayesian optimization (BO) has proven to be a powerful tool for automatically tuning control parameters without requiring knowledge of the underlying system dynamics.
Safe BO methods, in addition, guarantee safety during the optimization process, assuming that the underlying objective function does not change.
However, in real-world scenarios, time-variations frequently occur, for example, due to wear in the system or changes in operation.
Utilizing standard safe BO strategies that do not address time-variations can result in failure as previous safe decisions may become unsafe over time, which we demonstrate herein.
To address this, we introduce a new algorithm, Event-Triggered SafeOpt (ETSO), which adapts to changes online solely relying on the observed costs.
At its core, ETSO uses an event trigger to detect significant deviations between observations and the current surrogate of the objective function. When such change is detected, the algorithm reverts to a safe backup controller, and exploration is restarted. 
In this way, safety is recovered and maintained across changes.
We evaluate ETSO on quadcopter controller tuning, both in simulation and hardware experiments.
ETSO outperforms state-of-the-art safe BO, achieving superior control performance over time while maintaining safety.
\end{abstract}

\begin{keywords}%
  controller tuning, Bayesian optimization, safe learning, quadcopter, adaptive optimization, time-varying systems
\end{keywords}
\section{Introduction}
 
Bayesian optimization (BO) has emerged as a powerful tool to tune controllers without requiring an exact model of the physical system, \eg in robotics \citep{calandra2016bayesian,marco2016automatic} or automotive applications \citep{Neumann2018Autotuning}.
This is achieved by optimizing a black-box objective function that quantifies the closed-loop performance as a function of controller parameters.
This (unknown) function is often modeled as a Gaussian process (GP), which is then used to formulate an acquisition function guiding the optimization process.
Specifically, the acquisition function determines which control parameters to choose next trading off exploration and exploitation.

Unbounded exploration can lead to control parameters that are unsafe and can harm the hardware of the physical system or people operating it.
Therefore, \textit{safe} BO algorithms have been proposed \citep{sui2015safe,berkenkamp2016safe} and their efficacy has been demonstrated on various control systems, including quadcopters \citep{berkenkamp2021bayesian}.
Starting from an initial safe controller, these algorithms build up a set of control parameters that are assumed to be safe.
Safe BO considers no variations in the controlled system and thus, no variation in the performance function between experiments.  
However, the closed-loop performance may change over time, \eg due to changes in the system dynamics or in the reference.
As a consequence, the underlying model can become invalid, and once found optimal controllers may become sub-optimal.
In addition, such changes may shift the safe set, and parameters that used to be safe may become unsafe.
Hence, adaptation to these changes is vital
to ensure both: good performance and safety over time.

Time-varying BO has been proposed to address change over time \citep{bogunovic2016time,brunzema2022event}. However, existing methods focus on performance, yet do not consider safe exploration.
In general, aiming for a-priori safety guarantees under \emph{arbitrary} changes is an ill-posed problem, and some assumptions are needed.
In this work, we consider systems that remain in a fixed mode for a minimum of time.
Such lower bounds on the time between changes are common in switched systems literature, where they are referred to as \emph{dwell time} \citep{hespanha1999Stability,baumann2019fast}. Scenarios that display this behavior are, \eg those that involve abrupt, but infrequent changes.
In this paper, we propose to combine existing mechanisms for safe exploration with the recently proposed event-triggered BO approach \citep{brunzema2022event} to deal with changes.
Event-triggered BO detects change by comparing the expected performance with obtained measurements, without requiring an explicit model of the time variations.
We thus propose Event-Triggered SafeOpt (ETSO) as a new type of safe BO algorithm that can accommodate for time variations. 
Our algorithm learns for a fixed number of rounds until it obtains a suitable controller; a practical approach for various applications.
During the optimization and afterwards, the event trigger monitors the learned function for changes.
Once such changes are detected, ETSO reverts to a safe backup controller, resets the safe set, and restarts the exploration. 
Unlike previous contextual BO approaches \citep{König21adaptive}, ETSO does not require any contextual measurements that quantify the temporal change, but only the noisy evaluations of the performance function.
\begin{figure}[t!]
    \centering
    \includegraphics[width=\textwidth]{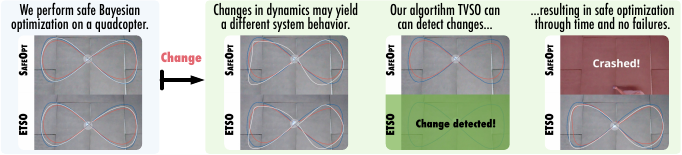}%
    \caption{We introduce ETSO, an algorithm for safe black-box optimization in time-varying environments. ETSO maintains safety across time variations in contrast to state-of-the-art algorithms such as SafeOpt.
    Images are from the video: \href{https://youtu.be/nLmeO-fMIvg}{https://youtu.be/nLmeO-fMIvg}.}
    \label{fig:hw_video}
\end{figure}

We test ETSO for controller tuning in simulation and on hardware.
Using a Bitcraze Crazyflie~2.1 quadcopter and experiments emulating typical variations, we show (\cf Figure~\ref{fig:hw_video}): \emph{(i)} standard safe BO can indeed fail due to time-variations, regardless of whether it is searching for or has already selected a controller, and \emph{(ii)}  ETSO remains safe across time variations in these situations.  
In summary, our main contributions are:
\begin{compactitem}
    \item \textbf{ETSO:} Combining safe BO and event triggering, we introduce the new algorithm ETSO that maintains safety across time variations by reverting to a backup controller and resetting its safe set when needed.
    \item \textbf{Empirical evaluation in simulation and hardware:} 
    ETSO shows superior safety and performance compared to baselines like SafeOpt, which can fail on time variations.
\end{compactitem}
\section{Problem Setting}\label{sec:problem}

We aim to find optimal control parameters $\optvar_t^* \in \Theta \subseteq \R^{d}$ over a time horizon $T$ in time-varying environments.\footnote{Throughout this paper, we highlight time dependency using a subscript $t \in \N$ to denote discrete time steps.}
The performance of a controller is quantified by the performance function ${J_{\timepar_t} : \Theta \to \R}$ where the functional dependency of $J_{\timepar_t}$ on $\optvar_t$ is unknown.
Here, we denote $\timepar_t$ as the system mode at time $t$ indicating that some variables of the closed loop system that influence the performance are time-varying.
We consider the setting where it is unknown \textit{which} variables can change, but the system has a minimum dwell time $\tau_{dwell}$ and the changes occur after some time steps since the last change $t'=t-\tau>\tau_{dwell}$, where $\tau$ is the time step of the change.
To avoid system failure, we further require the optimization procedure to be safe.
We therefore aim to ensure a minimum safety threshold $J_{\timepar_t}(\optvar_t) \geq J_{\mathrm{min},t}$ in probability where $J_{\mathrm{min},t} \in \R$ for all time steps $t \in \set{T}_{T} \coloneqq \{1, 2, \dots, T\}$ as it is standard in SafeOpt-type algorithms \citep{sui2015safe,berkenkamp2016safe}.
Note that in our setting this bound is also time-varying as it may depend on $\timepar_t$.
For safety, we assume access to a backup controller.
Such backup controllers are often supplied by the manufacturer
or obtained by some robust controller design: they are usually robust, but sub-optimal.
\begin{assumption} \label{ass:safe_backup}
    We have access to a safe backup controller $\optvar_B$ that stabilizes the close loop system for all system modes $\timepar_t$. A neighborhood of $\optvar_B$ is also safe such that $J_{\timepar_t}(\optvar_{\mathrm{B}})-\epsilon \geq J_{\mathrm{crit}}$.
\end{assumption}
In Assumption~\ref{ass:safe_backup}, $J_{\mathrm{crit}}\leq J_{\mathrm{min},t}$ is the critical performance before system failure and $\epsilon \in \R_{>0}$.
Remaining safe implies that we may not be able to find the global optimum, but only the optimum within the reachable set $\mathcal{R}_t \coloneqq \{\optvar \in \Theta \mid J_{\timepar_t}(\optvar) \geq J_{\mathrm{min},t} \text{ and $\optvar$ path-connected with $\optvar_B$}\}$.
Here we use $\optvar_{\mathrm{B}}$ from Assumption~\ref{ass:safe_backup} as an initial controller.
The reachable set may change depending on $\xi_t$; hence $\mathcal{R}_t$ is time-varying (see Figure~\ref{fig:toy_example}).
With $\mathcal{R}_t$, we can formalize our optimization problem as
\begin{equation}
    \optvar_t^* = \argmax_{\optvar \in \mathcal{R}_t} J_{\timepar_t}(\optvar).
    \label{eq:objective}
\end{equation}

At each time step $t$, an algorithm can query this performance function only once.
In the controller tuning context, this means performing one experiment \eg flying one round with a quadcopter (see Figure~\ref{fig:hw_video}).
The algorithm then receives an observation which is perturbed by independent and identically distributed (i.i.d.) zero mean Gaussian noise with $\sigma_n^2$ as the noise variance as
\begin{equation}
    \obs_t = J_{\timepar_t}(\optvar_t) + w_t, \quad w_t \sim \mathcal{N}(0, \sigma_n^2). \label{eq:noisy_obs}
\end{equation}

\fakepar{Problem Statement} We aim to develop a practical algorithm to tune controllers in time-varying environments by optimizing \eqref{eq:objective}.
The optimization should be performed without safety violations.
Contrary to related work in safe BO, we consider time variations in the underlying function, thus a changing reachable set $\mathcal{R}_t$, and no explicit measurements of the changing parameters.
\section{Related Work}\label{sec:related_work}

Our proposed algorithm ETSO builds on ideas from time-varying BO as well as safe BO to ensure safe learning over time. 
In the following, we discuss the related work to ETSO in more detail.

\fakepar{Safe Bayesian Optimization}
Safe BO methods aim to learn optimal parameters while also ensuring safety.
Most proposed algorithms follow the ideas of SafeOpt \citep{sui2015safe}.
Some popular examples of SafeOpt variants include modified SafeOpt \citep{berkenkamp2016safe}, multiple-constraint SafeOpt \citep{berkenkamp2021bayesian}, GoSafe \citep{baumann2021gosafe,sukhija2023gosafeopt}, and StageOpt \citep{pmlr-v80-sui18a}.
These variants have been used in various control applications, such as quadcopters \citep{berkenkamp2021bayesian} or heat pumps \citep{khoravi29sO}.
An alternative method, GoOSE \citep{turchetta2019safe}, can transform any algorithms to be safe by incorporating similar safety measures as SafeOpt.
However, all of these methods assume that the performance function is time-invariant.
This means that safe decisions are assumed to remain safe over time.
This does not hold in dynamic environments, where previously safe decisions may become unsafe.
We build on the modified SafeOpt method and extend it to time-varying environments.
It is worth noting that other approaches could be used; we predominantly choose modified SafeOpt due to computational efficiency, which is necessary due to the limited battery life of the quadcopters.

\fakepar{Adapting to Time-Variations}
There are several approaches to time-varying BO (TVBO), which consider trade-offs between remembering and forgetting data. These approaches either use a Markov-chain model~\citep{bogunovic2016time,brunzema2022controller,brunzema2022event} or a variation budget~\citep{zhou2021no, deng2022weighted} to model time variations.
Since we aim to be adaptive to changes, we build on the algorithm proposed in \cite{brunzema2022event} (\cf Section~\ref{sec:background}).
It leverages ideas from event-triggered learning  \citep{solowjow2020event,umlauft2019feedback}, \ie detecting changes online and re-learning only when necessary, and transfers them to TVBO.
Unlike all proposed TVBO approaches, we additionally consider safety and obtain an algorithm that learns safely under time variations such that it can be applied to hardware systems.

\fakepar{Contextual Safe Bayesian Optimization}
In contextual BO \citep{krause2011contextual}, the model of the underlying function includes a context parameter.
This parameter represents environmental conditions that cannot be influenced during optimization and may vary over time.
Safe contextual BO methods leverage this context parameter to account for time variations in the system and enable safe exploration using BO.
Such methods have been successfully applied to various control problems \citep{su18autonomous,fiducioso2019safe,de2023safe}.
Other methods, such as GoOSE for adaptive control \citep{König21adaptive} and VACBO \citep{xu2023violation}, are variants of this approach.
Given our problem setting in Section~\ref{sec:problem}, safe contextual BO methods cannot be applied as we assume to only have access to noisy observations and no additional environmental condition measurements.
To circumvent this lack of additional information to quantify change, our approach utilizes an event-based trigger to detect changes based on the expected and observed performance.
\section{Background}\label{sec:background}

Our method combines safe Bayesian optimization with the adaptation to time variations using an event trigger.
Next, we will introduce the necessary background and notation for both concepts.

\fakepar{Gaussian Processes}
GPs \citep{williams2006gaussian} are a probabilistic non-parametric method for regression that provide explicit uncertainty estimates for the learned function.
This makes them a powerful tool for regression and also as the probabilistic surrogate model for Bayesian optimization. 
A GP is fully defined by its mean function $m : \Theta \to \R$ and kernel $k : \Theta \times \Theta \to \R$ and we denote it as $f(\optvar) \sim \mathcal{GP}(m, k)$.
The prior belief defined through the mean function and kernel can be updated using a data set $\mathcal{D}_t\coloneqq\{(\optvar_i,\obs_i)\}_{i=1}^{t-1}$ to obtain a posterior prediction over a test point $\optvar$.
Assuming \iid additive Gaussian noise in the observations as in \eqref{eq:noisy_obs}, the posterior mean and covariance are $\mu_{\mathcal{D}_t}(\optvar)=m(\optvar)+\bm{k}_t^T(\optvar) (\mathbf{K}_t+\sigma_n^2 \mathrm{\mathbf{I}}_t)^{-1} (\bm{\hat{J}}-m(\optvar))$ and $\sigma^2_{\mathcal{D}_t}(\optvar)=k(\optvar,\optvar)-\bm{k}_t^T(\optvar) (\mathrm{\mathbf{K}}_t+\sigma_n^2 \mathrm{\mathbf{I}}_t)^{-1}\bm{k}_t(\optvar)$, respectively, where $\mathbf{K}_t = [k(\optvar_i, \optvar_j)]_{i, j = 1}^{t-1}$ is the Gram matrix, $\bm{k}_t(\optvar)= [k(\optvar_i, \optvar)]_{i=1}^{t-1}$, and noisy measurements are concatenated as $\bm{\hat{J}}=[\obs_1, \dots, \obs_{t-1}]^\top$.

\fakepar{Safe Bayesian Optimization}
BO is an optimization method for noisy black-box functions \citep{garnett2023bayesian}.
It searches for the global optimum as $\optvar^*=\argmax_{\optvar\in\Theta} J(\optvar)$.
The core components of BO are a surrogate model such as a GP and an acquisition function that balances exploration and exploitation.
To \textit{safely} optimize $J$, \ie without violating some minimal performance $J_{\mathrm{min}}$ with high probability, SafeOpt-type algorithms impose regularity assumptions on $J$:
first, that $J$ is $L$-Lipschitz continuous,
and second, that $J \in \mathcal{H}_k$, where $(\mathcal{H}_k, \|\cdot\|_k)$ is the unique reproducing kernel Hilbert space with the reproducing kernel $k$.
Specifically, SafeOpt-type algorithms require that there exists a bound $R \in \R_{\geq 0}$ such that $\|J\|_k \leq R$.
With this, one can leverage frequentist uncertainty bounds on the GP regression error which state that $|J(\optvar) - \mu_{\mathcal{D}_t}(\optvar)| \leq \beta_t \sigma_{\mathcal{D}_t}(\optvar)$ for all $\theta \in \Theta$ and all time steps holds true with probability at least $1 - \delta$ (see \eg \citet[Theorem~6]{srinivas2009gaussian}).
In our method, we use the modified SafeOpt algorithm \citep{berkenkamp2016safe} which directly computes the safe set based on the upper and lower confidence bounds based on such uncertainty bounds as $u_t(\optvar)\coloneqq\mu_{\mathcal{D}_t}(\optvar)+\beta_t\sigma_{\mathcal{D}_t}(\optvar)$ and $\ell_t(\optvar)\coloneqq\mu_{\mathcal{D}_t}(\optvar)-\beta_t\sigma_{\mathcal{D}_t}(\optvar)$, respectively.
With this, the safe set, the maximizers, and the expanders, of the modified SafeOpt algorithm are $ \mathcal{S}_t := \{\optvar \in \Theta \mid \ell_t(\optvar) \geq J_\mathrm{min}\}
$, $\mathcal{M}_t := \{ \optvar\in \mathcal{S}_t \mid u_t(\optvar) \geq \max_{\optvar' \in \mathcal{S}_t} \ell_t(\optvar')\} $ and $ \mathcal{G}_t := \{ \optvar\in \mathcal{S}_t \mid g_t(\optvar)>0\},\text{ with }g_t(\optvar):= \mid\{ \optvar'\in \Theta\setminus \mathcal{S}_t\mid \ell_{t, (\optvar,u_t(\optvar))}(\optvar')\geq J_\mathrm{min}\}\mid$, respectively. 
The acquisition function is then defined as $\optvar_t = \arg\max_{\optvar \in \mathcal{G}_t \cup \mathcal{M}_t} u_t(\optvar)-\ell_t(\optvar)$.
To start the optimization, SafeOpt-type algorithms require an initial safe controller as in Assumption~\ref{ass:safe_backup}.

\fakepar{Event-Triggered Time-Varying Bayesian optimization}
The concept of detecting and adapting to time-variations in BO using an event trigger was introduced by \cite{brunzema2022event}.
It allows their algorithm \ET{} to account for time variations without explicitly modelling and estimating a rate of change.
The event trigger is defined as follows.
\begin{definition}[Event-triggered TVBO framework]\label{def:event-trigger_framework}
    Given a test function $\psi_t$ and a threshold function $\kappa_t$, both of which can depend on the current dataset $\mathcal{D}_t$ and the latest query location and measurement pair $(\optvar_t, \obs_t)$, the event trigger at time step $t$ is defined as
    \begin{equation}
        \gamma_{\mathrm{reset}} = 1 \Leftrightarrow \psi_t\left(\mathcal{D}_t, (\optvar_t, \obs_t)\right) > \kappa_t\left(\mathcal{D}_t, (\optvar_t, \obs_t)\right)   \label{eq:event-trigger_framework}
    \end{equation}
    where $\gamma_{\mathrm{reset}}$ is the binary indicator for whether to reset the dataset $(\gamma_{\mathrm{reset}}=1)$ or not $(\gamma_{\mathrm{reset}}=0)$.
\end{definition}
\cite{brunzema2022event} define their test and threshold function such that the event trigger activates when observations deviate significantly from the prediction leveraging the following bound on deviations between measurements and the prediction where $t'$ is the time step since the last reset.
\begin{lemma}[{\citet[Lemma~3]{brunzema2022event}}]\label{lem:trigger_bound}
Let $J \sim \mathcal{GP}(0, k) $. Pick $\delta_B \in (0, 1)$ and set $\rho_{\triggerT} = 2\ln{\frac{2\pi_{\triggerT}}{\delta_B}}$, where $\sum_{{\triggerT} \geq 1} \pi_{\triggerT}^{-1} = 1$, $\pi_{\triggerT} >0$. Also set $\bar{w}_{\triggerT}^2 = {2\sigma_n^2 \ln \frac{2\pi_{\triggerT}}{\delta_B}}$. Then, observations $\obs_t $ satisfy 
$
    | \obs_t - \mu_{\mathcal{D}_t}(\optvar_t) | \leq \rho_{\triggerT}^{1/2} \sigma_{\mathcal{D}_t} (\optvar_t) + \bar{w}_{\triggerT}
$ for all time steps $\triggerT \geq 1$ with prob. at least $1-\delta$.
\end{lemma}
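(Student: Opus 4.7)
The plan is to decompose the deviation $\obs_t-\mu_{\mathcal{D}_t}(\optvar_t)$ into a model-error term and a noise term, apply a one-dimensional Gaussian tail bound to each, and finally combine them over time via the weighted union bound $\sum_{\triggerT\geq 1}\pi_{\triggerT}^{-1}=1$. I would not need any RKHS-based frequentist confidence bound here; because the lemma is posed in the Bayesian GP setting, the residual at a fixed (conditioned) query point is exactly Gaussian.

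First, I would exploit the GP prior. Since $J\sim\mathcal{GP}(0,k)$ and $\mathcal{D}_t$ consists of noisy samples of $J$ with \iid Gaussian noise as in \eqref{eq:noisy_obs}, conditioning on $\mathcal{D}_t$ (and hence on the $\mathcal{D}_t$-measurable query $\optvar_t$) gives the classical one-point posterior $J(\optvar_t)\mid\mathcal{D}_t \sim \mathcal{N}(\mu_{\mathcal{D}_t}(\optvar_t),\sigma^2_{\mathcal{D}_t}(\optvar_t))$. Then $(J(\optvar_t)-\mu_{\mathcal{D}_t}(\optvar_t))/\sigma_{\mathcal{D}_t}(\optvar_t)$ is standard normal, so the elementary tail inequality $\Pr(|Z|>c)\leq 2\exp(-c^2/2)$ for $Z\sim\mathcal{N}(0,1)$, applied with $c=\rho_{\triggerT}^{1/2}$ and the stated choice $\rho_{\triggerT}=2\ln(2\pi_{\triggerT}/\delta_B)$, yields conditional failure probability at most $\delta_B/\pi_{\triggerT}$. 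Taking expectations over $\mathcal{D}_t$ transports this bound to the unconditional setting. The same one-dimensional tail bound applied to $w_t\sim\mathcal{N}(0,\sigma_n^2)$, which is independent of $\mathcal{D}_t$, with threshold $\bar{w}_{\triggerT}=\sqrt{2\sigma_n^2\ln(2\pi_{\triggerT}/\delta_B)}$ again gives failure probability at most $\delta_B/\pi_{\triggerT}$. Using $|\obs_t-\mu_{\mathcal{D}_t}(\optvar_t)|\leq |J(\optvar_t)-\mu_{\mathcal{D}_t}(\optvar_t)|+|w_t|$, the per-round failure probability is at most $2\delta_B/\pi_{\triggerT}$, and a union bound over $\triggerT\geq 1$ together with $\sum_{\triggerT}\pi_{\triggerT}^{-1}=1$ gives total failure probability at most $2\delta_B$, matching the stated $1-\delta$ guarantee after re-identifying $\delta$ with $2\delta_B$ (equivalently, folding the factor of two into $\pi_{\triggerT}$).

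The main obstacle I expect is subtle rather than computational: the query $\optvar_t$ is chosen adaptively from $\mathcal{D}_t$, so a naive application of a Gaussian tail bound to $J(\optvar_t)-\mu_{\mathcal{D}_t}(\optvar_t)$ is not immediately justified. The resolution is to condition on $\mathcal{D}_t$, which fixes $\optvar_t$ and reduces the problem to a pointwise Gaussian deviation; crucially, this avoids having to take a uniform-over-$\Theta$ bound, so no covering or information-gain argument is required. The rest is routine Gaussian concentration combined with the weighted union bound over rounds.
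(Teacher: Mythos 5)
Your argument is essentially the standard proof of this lemma, which the paper imports from \citet[Lemma~3]{brunzema2022event} without reproving: condition on $\mathcal{D}_t$ so that the query $\optvar_t$ is fixed and the posterior residual $J(\optvar_t)-\mu_{\mathcal{D}_t}(\optvar_t)$ is exactly Gaussian, bound the independent noise term separately, and combine the per-round failure events across $\triggerT\geq 1$ via the $\pi_{\triggerT}$-weighted union bound. The only slack is your factor of two: using the sharper Gaussian tail $\Pr(|Z|>c)\leq e^{-c^2/2}$ (rather than $2e^{-c^2/2}$), the ``2'' inside the logarithms defining $\rho_{\triggerT}$ and $\bar{w}_{\triggerT}$ exactly accounts for splitting the per-round budget $\delta_B/\pi_{\triggerT}$ between the posterior and noise events, so the bound holds with probability at least $1-\delta_B$ directly, without re-identifying $\delta$ with $2\delta_B$.
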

With this Lemma for non-changing functions $J$, they set the test function and threshold function to
\begin{align}
    \psi_t(\optvar_t) &\coloneqq \psi_t\left(\mathcal{D}_t, (\optvar_t, \obs_t)\right) = \left\lvert \obs_t - \mu_{\mathcal{D}_t}(\optvar_t) \right\rvert \label{eq:lower}\\
    \kappa_t(\optvar_t) &\coloneqq \kappa_t\left(\mathcal{D}_t, (\optvar_t, \obs_t)\right) = \sqrt{\rho_{\triggerT}} \sigma_{\mathcal{D}_t} (\optvar_t) + \bar{w}_{\triggerT}. \label{eq:upper_old}
\end{align}

\section{ETSO -- Combining Safety and Adaptation to Changing Environments}\label{sec:method}

In this section, we present our algorithm Event-Triggered SafeOpt (ETSO) for safe BO in time-varying environments.
\begin{figure}[t]
    \centering
    \includegraphics[width=\textwidth]{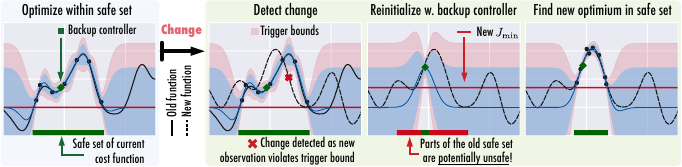}%
    \caption{Overview of our proposed algorithm. ETSO starts by optimizing a the performance function with SafeOpt. Once our event trigger detects a significant change, we reset to the safe backup controller, calculate a new threshold $J_{\text{min},t}$, and restart the exploration.}
    \label{fig:toy_example}
\end{figure}
Specifically, we aim to solve \eqref{eq:objective} for a given task by combining the concepts of SafeOpt-type algorithms and adapting to changes by detecting them based on the expected performance with an event trigger that can reset the data and thus, the safe set.
Figure~\ref{fig:toy_example} gives an overview of ETSO and it is summarized in Algorithm~\ref{alg:net}.
Initially, ETSO performs safe BO for safe exploration for $T_L \leq T$ rounds while being monitored by the event trigger.
This way, changes can already be detected during optimization.
As discussed, we choose the modified SafeOpt algorithm \citep{berkenkamp2016safe} for this step, as well as an event trigger as in Definition~\ref{def:event-trigger_framework} with \eqref{eq:lower} and \eqref{eq:upper_old}.
After $T_L$ rounds, ETSO chooses the parameters that maximize the posterior mean within the safe set.
If at any point the trigger is activated, ETSO resets the dataset, and explores the new system mode $\timepar_t$ with SafeOpt.
For this, the first round is performed using $\optvar_{\mathrm{B}}$ and a new limit to the cost $J_{\mathrm{min},t}$ with respect to it, as will be described in the implementation details section.

To prove safety for ETSO for different modes $\timepar_t$, we require assumptions on the performance of our event trigger as well as on the performance of the controller for the iteration in the new modes. 
\begin{assumption}
    \label{ass:imm_fail}
    When there is a change relevant for the performance, it is detected. Let the change happen at time $\tau$. The parameters $\optvar_{\tvar}$ are at least critically stable in the new system mode $\timepar_{\tvar}$.
\end{assumption}
The first part of the assumption is satisfied if the change is significant enough \wrt the threshold function of our event trigger.
This threshold can be large during optimization given our acquisition function that selects control parameters with the highest variance in the safe set.
It is substantially smaller after $T_L$ rounds, when ETSO selects its best predicted control parameters.
We will demonstrate in our experiments that not all changes can nor should be detected, as some are insignificant.
The second part of the assumption is a natural extension of Assumption~\ref{ass:safe_backup} and acts as a regularity assumption on the temporal change.
With the two assumptions, the following is immediate:

\begin{corollary}
    Given Assumption~\ref{ass:safe_backup} and Assumption~\ref{ass:imm_fail}, ETSO remains safe across time variations.
\end{corollary}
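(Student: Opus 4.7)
The plan is to partition the time horizon into phases separated by the change instants and argue safety within each phase via the standard modified SafeOpt guarantee, then stitch the phases together using Assumption~\ref{ass:imm_fail} to bridge the transition rounds. Because the dwell time $\tau_{\mathrm{dwell}}$ is strictly positive, the number of change events in $[1,T]$ is finite, so a phase-wise induction is well-defined.

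First, I would fix a mode $\timepar$ that persists over an interval of length at least $\tau_{\mathrm{dwell}}$. On such an interval the performance function $J_{\timepar}$ is time-invariant, and ETSO runs modified SafeOpt from a certified safe seed --- either $\optvar_{\mathrm{B}}$ just after a reset, or a previously certified point in $\mathcal{S}_t$. Since $\mathcal{S}_t$ is built from the lower confidence bound $\ell_t$, every acquired $\optvar_t \in \mathcal{S}_t$ satisfies $J_{\timepar}(\optvar_t) \geq J_{\mathrm{min},t} \geq J_{\mathrm{crit}}$ with probability at least $1-\delta$; this is the in-mode safety guarantee of \citet{berkenkamp2016safe}, which I would invoke as a black box rather than reprove.

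Next, I would handle a change instant $\tau$. The parameter $\optvar_{\tau}$ was chosen from the pre-change safe set and may no longer lie in the updated safe set for the new mode $\timepar_{\tau}$, but the critical-stability clause of Assumption~\ref{ass:imm_fail} guarantees $J_{\timepar_{\tau}}(\optvar_{\tau}) \geq J_{\mathrm{crit}}$, so the system does not fail on that round. The detection clause then fires the trigger in \eqref{eq:event-trigger_framework}, ETSO resets its dataset, and the next experiment is executed at the backup controller $\optvar_{\mathrm{B}}$. By Assumption~\ref{ass:safe_backup}, $\optvar_{\mathrm{B}}$ is safe in every mode with slack $\epsilon$, so the updated $J_{\mathrm{min},t}$ can be chosen feasibly and a nonempty initial safe set around $\optvar_{\mathrm{B}}$ seeds the next phase, returning us to the setting of the first step.

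Iterating the in-mode argument across all such phases yields safety on the whole horizon. The main and essentially only obstacle will be the single round at the change instant where the outdated parameter $\optvar_{\tau}$ is still in effect, and this is precisely what the critical-stability half of Assumption~\ref{ass:imm_fail} is designed to absorb; once that one round is bridged, the rest reduces to pasting the SafeOpt guarantee into a finite induction over change events, which is why the authors declare the claim immediate. A secondary subtlety I would flag is the union bound over the $1-\delta$ confidence events on each phase, but with finitely many phases this can be absorbed into the choice of per-phase confidence parameter without altering the structure of the argument.
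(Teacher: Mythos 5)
Your proposal is correct and follows essentially the same route as the paper: in-mode safety from the existing SafeOpt guarantee, the change round bridged by the detection and critical-stability clauses of Assumption~\ref{ass:imm_fail}, and a restart from $\optvar_{\mathrm{B}}$, which Assumption~\ref{ass:safe_backup} certifies for the new mode, iterated over the (finitely many) stationary phases. Your explicit phase-wise induction and the flag about the union bound over per-phase confidence levels merely spell out details the paper leaves implicit in its "decoupling into stationary sections" remark.
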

\begin{proof}
    We optimize safely with high probability using SafeOpt as shown in \citet[Theorem~1]{sui2015safe}. With Assumption~\ref{ass:imm_fail}, once a change occurs,  we detect it and have no system failure. ETSO resets the safe set and reverts to $\optvar_{\mathrm{B}}$, which is safe for the new system mode by Assumption~\ref{ass:safe_backup}. The following optimization is again safe by the above arguments proving the claim.
\end{proof}
This Corollary is an immediate extension of the safety proof from \cite{sui2015safe} to the time-varying case with an event trigger under the stated assumptions. Notwithstanding, it formalizes the conditions where ETSO is guaranteed to be safe and underscores the significance of a reliable event trigger. Essentially, the proposed event trigger allows for decoupling the entire process into individual, stationary sections, for which the safety rationale of existing safe BO can be applied.
We will demonstrate in our experiments, that ETSO will detect significant changes.
We will next describe implementation details regarding the event trigger and the new initialization after a reset.

\begin{algorithm2e}[b!]
\caption{Event-Triggered SafeOpt (ETSO).}
\label{alg:net}
\KwIn{GP prior ($\mu_{\mathcal{D}_0}, k$), backup controller $\optvar_{\mathrm{B}}$, $\delta_B\in(0,1)$, $\beta$,  max. learn rounds $T_L > 1$}
Initialize GP with $\mathcal{D}_1=(\optvar_{\mathrm{B}}, \obs_{\mathrm{B}})$; Set $J_{\mathrm{min},t} \gets J_{\mathrm{min}}(\obs_{\mathrm{B}})$; Set $t' \gets 1$\;
    \begin{multicols}{2}
    \For{$t\leftarrow 1$ \KwTo $T$}{
        Update GP with $\mathcal{D}_t $\;
        \uIf{$t'\leq T_L$}{
            $\mathcal{S}_t \gets \{\optvar \in \mathcal{X} \mid l_t(\optvar) \geq J_{\mathrm{min},t}\}$\;
            $\mathcal{M}_t \gets \{ \optvar \in \mathcal{S}_t \mid u_t(\optvar) \geq \underset{\optvar' \in \mathcal{S}_t}{\max} l_t(\optvar')\} $\;
            $\mathcal{G}_t \gets \{ \optvar\in \mathcal{S}_t \mid g_t(\optvar)>0\}$\;
            
        }
        \uIf{$t'< T_L$}{
        $\optvar_t \gets \argmax_{\optvar \in \mathcal{G}_t\cup \mathcal{M}_t}(a_t(\optvar)) $\;
        }
        \uElse{
        $\optvar_t \gets \argmax_{\optvar \in \mathcal{S}_t}(\mu_t(\optvar)) $\;
        }
        $\obs_t \gets $\textsc{Observation}$(\optvar_t)$\;
        $\mathcal{D}_{t+1}, t' \gets $\textbf{Event Trigger}$(\mathcal{D}_t, (\obs_t,\optvar_t), \delta_B)$\;
        }

\columnbreak
    \SetKwProg{myproc}{Event Trigger}{$(\mathcal{D}_t, (\obs_t,\optvar_t), \delta_B)$\string:}{{\KwRet $\mathcal{D}_{t+1}, t'$}}
    \myproc{}{
    $\gamma_{reset} \gets$ evaluate \eqref{eq:event-trigger_framework} with \eqref{eq:lower} and \eqref{eq:scaled_upper}\;
        \uIf{$\gamma_{reset}$}{
            \tcc{reset to backup controller}
            $\obs_{\mathrm{B}} \gets $\textsc{Observation}$(\optvar_{\mathrm{B}})$\;
            $\mathcal{D}_{t+1}=\{(\optvar_{\mathrm{B}}, \obs_{\mathrm{B}}), (\optvar_t,\obs_t)\}$\;
            $J_{\mathrm{min},t} \gets J_\mathrm{min}(\obs_{\mathrm{B}})$\;
            $t' \gets 2 $\;
        }
        \uElse{
            $\mathcal{D}_{t+1}\gets \mathcal{D}_t \cup {(\optvar_t,\obs_t)}$\;
            $t' \gets t' + 1$\;
        }
    }
    \end{multicols}
\end{algorithm2e}

\fakepar{Implementation details}
We next describe key algorithmic choices in our implementation of ETSO\footnote{The repository is freely available in \href{https://github.com/antoHolz/ETSO}{https://github.com/antoHolz/ETSO}.}.
First, the data cannot be standardized for the GP, because each data point is added to the SafeOpt algorithm individually, and all sets are updated at each time step.
Instead, we normalize the data with the rounded up absolute value of $\obs_{\mathrm{B}}$ and set the prior mean $\mu_{\mathcal{D}_0} \coloneqq m(\optvar) = -1$, obtaining the relation $-1=\mu_{\mathcal{D}_0} \leq \mu_{\mathcal{D}_1}(\optvar_{\mathrm{B}}) \leq \obs_{\mathrm{B}} / \lceil \text{abs}(\obs_{\mathrm{B}}) \rceil$. 
From SafeOpt we also derive some restrictions on the safety threshold and the prior standard deviation $\sigma_{\mathcal{D}_0}$. 
$J_{\mathrm{min},t}$ must be below the performance of the backup controller
and above the lower bound $\ell_t(\optvar)$ such that points for which we have no relevant information cannot be classified as safe.
This gives us the relation
\begin{align}
\label{eq:limJ}
    \mu_{\mathcal{D}_0}-\beta_0 \cdot \sigma_{\mathcal{D}_0}  < J_{\mathrm{min},t} < \mu_{\mathcal{D}_1}(\optvar_{\mathrm{B}})-\beta_1\cdot \sigma_{\mathcal{D}_1}(\optvar_{\mathrm{B}}),
\end{align}%
and we choose $J_{\mathrm{min},t}$ and $\sigma_{\mathcal{D}_0}$ as $J_{\mathrm{min},t} = \mu_{\mathcal{D}_0}-\beta_1\cdot \sigma_n - \epsilon$ and $\sigma_{\mathcal{D}_0} = \max( \tfrac{\mu_{\mathcal{D}_0}-J_{\mathrm{min},t}+\epsilon}{\beta_0}, \frac{1}{3} )$, respectively, as well as $\epsilon=0.2$.
Additionally, we scale the threshold function of the trigger as
\begin{align}
    \kappa_t\left(\mathcal{D}_t, (\optvar_t, \obs_t)\right) = 3/4 \cdot \sqrt{\rho_{\triggerT}} \sigma_{\mathcal{D}_t} (\optvar_t) + 1/4 \cdot \bar{w}_{\triggerT}. \label{eq:scaled_upper}
\end{align}
The reason for this is that the infinite series in Lemma~\ref{lem:trigger_bound} will result in large bound over time--especially in the noise term $\bar{w}_{\triggerT}$.
While this is necessary for theoretical reasons \citep{srinivas2009gaussian,bogunovic2016time,brunzema2022event}, it has practical downsides regarding safety as some changes may not be detected in later rounds.
By having the tighter bound in \eqref{eq:scaled_upper} that still increases in time, we increase safety in practice by reverting to $\optvar_{\mathrm{B}}$ more quickly.

\section{Experiments on a Quadcopter}

\begin{wrapfigure}[12]{r}{0.38\textwidth}
    \centering
    \vspace{-1.4cm}
    \includegraphics[width=0.38\textwidth]{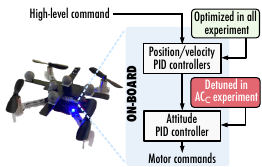}%
    \caption{Control structure on the Crazyflie 2.1. We optimize the PID gains of the position controller.}
    \label{fig:quadcopter_arch}
\end{wrapfigure}
We demonstrate our algorithm in simulation and on hardware of a quadcopter.
Specifically, we use a Bitcraze Crazyflie 2.1 with the Crazyswarm firmware~\citep{crazyswarm} and use a Vicon camera system to obtain position estimates.
For the simulation, we use the gym-pybullet-drones~\citep{panerati2021learning} environment.
The controller of the Crazyflie is a cascaded PID~\citep{mellinger2011minimum} as in Figure~\ref{fig:quadcopter_arch}.
We only tune the PI gains of the position PID, \ie $\optvar = [P_{xy}, P_z, I_{xy}, I_z] \in \R^4$.
As the backup controller, we set $\optvar_{\mathrm{B}}=[0.4,1.25,0.05,0.05]$ as specified in the firmware.
For the GP, we use a Matérn kernel ($\nu=5/2$), lenghtscales $[0.15, 0.75, 0.025, 0.050]$ as obtained from simulation, and set the noise standard deviation to $\sigma_n = 0.016$ based on measurements obtained from the real system.
Additionally, we use the parameter bounds $[0,10]$ for each parameter, thus including many unstable configurations in the search space. 
We use $\beta_\mathrm{const}=2$ as the $\beta$ hyperparameter for SafeOpt; a value that is widely used in the literature (\eg\cite{berkenkamp2016safe}).
Alternatively, a logarithmic $\beta$ can be used, which increases with time \citep{kandasamy2015high}.
This more conservative approach can improve safety.
We choose $\beta_\mathrm{const}$ because it leads to faster learning, which is necessary given the quadcopter's limited battery life.
We choose for the optimization the following cost functions for simulation and hardware, respectively,
\begin{align}\label{eq:cost}
    J_{\text{sim}}(\optvar)=-\sum_{q=1}^Q |e_{xyz}(q)|^\frac{1}{2},   \:\:  J_{\text{hw}}(\optvar)=\-\sum_{q=1}^Q |e_{xyz}(q)|^\frac{1}{4},
\end{align}
where $e_{xyz}(q)= \min_{k\in \mathcal{T}_{K}}{ \sqrt{(x_k-x_q)^2+(y_k-y_q)^2+(z_k-z_q)^2}}$, for a trajectory with waypoint index $q \in \mathcal{T}_{Q}$ and position measurements at time steps $k\in \mathcal{T}_{K}$ within an experiment. Hence, we measure performance as the minimum deviation achieved by the quadcopter to pre-defined waypoints. 
We perform three different experiments.
In each of them, a variation occurs at time $\tvar > T_L$:
\begin{compactitem}
    \item \textbf{2D Trajectory variation (2DTV)}: the quadcopter flies a figure eight in the $xy$-plane, and after $\tvar$, an hourglass-like figure is specified as the reference.
    \item \textbf{3D Trajectory variation (3DTV)}: the quadcopter flies a figure eight in the $xy$-plane, and after $\tvar$, a figure eight with altitude changes is specified as the reference.
    \item \textbf{Change in attitude controller (AC$_C$)}: the quadcopter flies a figure eight in the $xy$-plane with the default attitude controller, and at $\tvar$, we change the gains of the attitude controller of the quadcopter by multiplying them with the factor $C$ (\cf Figure~\ref{fig:quadcopter_arch}).\footnote{In the output of an attitude controller, the computed torques usually have the form (from~\cite{luukkonen2011modelling}):
        \begin{align}
            \tau_{\phi} &= \left(K_{\phi,P} \cdot e_{\phi}+ K_{\phi,I}\cdot \int e_{\phi}(t)+ K_{\phi,D}\cdot \tfrac{\partial e_{\phi}(t)}{\partial t}\right) \cdot I_{xx}
             \nonumber
        \end{align}
        where $I$ is the moments of inertia of the quadcopter. Here, changing all gains by some factor $C$ is equivalent to changing the moment of inertia $I_{xx}$. In other words, changing the low level controller has the same effect as having miss-specified moment of inertia in the controller. This is consistent with the visual impression (see video).}
\end{compactitem}
\begin{figure}[t!]
    \centering
    \includegraphics[width=\textwidth]{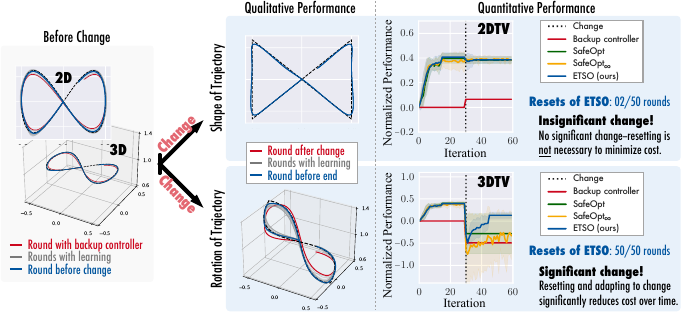}%
    \caption{Trajectory change experiments in simulation. Left: Trajectories of the quadcopter before the variation time. Center: Trajectories of the quadcopter after the variation time, with ETSO. Right: Performance of the algorithms normalized by the initial performance of the backup controller as $\left(J_{\xi_0}(\bm{\theta}_B) - J_{\xi_t}(\bm{\theta}_t)\right) / J_{\xi_0}(\bm{\theta}_B)$ and the resets performed by ETSO.}
    \label{fig:sim_traj}
\end{figure}
We execute the simulation experiments for $T_{\mathrm{sim}}=60$ rounds, with $T_{L,\mathrm{sim}}=15$ rounds of exploration for the finite learning algorithms.
Due to the limited battery life of the quadcopters, we perform fewer rounds in the hardware experiments.
For the 2DTV and AC experiments, we define $T_{\mathrm{hw}}=20$ and $T_{L,\mathrm{hw}}=9$, and for the 3DTV experiment, $T_{\mathrm{hw}}=16$ and $T_{L,\mathrm{hw}}=7$.
We compare our algorithm \textbf{ETSO} to \textbf{SafeOpt} using the specified budgets above.
Further, we compare to \textbf{SafeOpt$_\infty$} as a variant with unlimited budget as well as to the performance of the safe \textbf{backup controller}.

\subsection{Simulation Results}
Figure~\ref{fig:sim_traj} shows the results of the trajectory change experiments 2DTV and 3DTV.
Both experiments have the same initial trajectory (a figure eight in the 2D plane) and undergo a change at $\tvar=30$.
The right side of Figure~\ref{fig:sim_traj} shows the performance of the algorithms for each experiment setup.
For the experiment 2DTV (top), the change in reference results in an insignificant change in performance.
With this insignificant change, the event trigger does not activate; hence, ETSO does not reset, saving resources compared to re-optimizing.
Here, all baselines outperform the backup controller and yield similar performance.
In the experiment 3DTV (bottom), ETSO outperforms SafeOpt$_\infty$ and SafeOpt.
Changing the reference in altitude results in a significant change in the performance function.
ETSO detects this change, resets (see bottom right in Figure~\ref{fig:sim_traj}), and then re-optimizes without stale data.
We observe the same trend for ETSO and SafeOpt in Figure~\ref{fig:pic2} (left) for the AC$_{.65}$ simulation experiments. 
However, using SafeOpt$_\infty$, the quadcopter crashed 50 out of 50 runs after some time steps after the system change.
SafeOpt$_\infty$ assumes that areas that were previously safe remain safe, but a new system mode can shift the safe set.
By resetting to $\optvar_{\mathrm{B}}$, ETSO resets its safe set to ensure good performance as well as safe optimization over time with no crashes.
SafeOpt also results in no crashes.
Therefore, the system change in AC$_{.65}$ satisfied Assumption~\ref{ass:imm_fail}.

\subsection{Hardware Results}
The results of the hardware experiments displayed in Figure~\ref{fig:pic2} confirm the insights from the simulations.
Insignificant changes (2DTV) do not require re-learning.
Here, ETSO and SafeOpt perform equivalently as ETSO does not reset.
SafeOpt$_\infty$ exhibits marginally improved performance towards the end of the 2DTV experiment.
\begin{figure}[t]
    \centering
    \includegraphics[width=\textwidth]{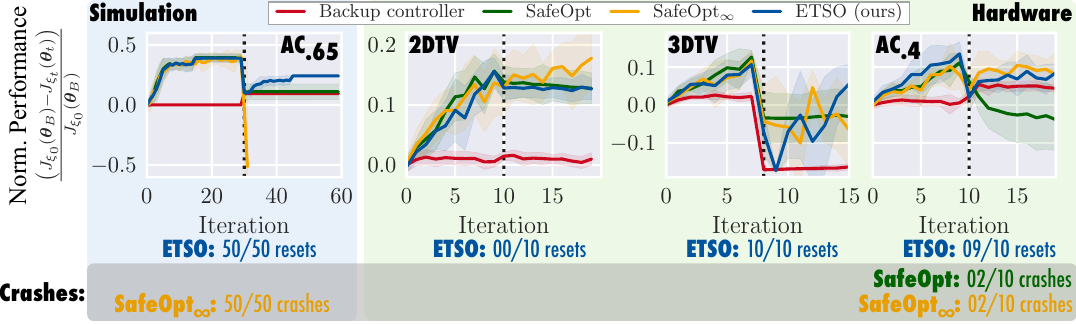}%
    \caption{Performance in the remaining simulation and hardware experiments (only runs without failures), normalized by the initial performance of the backup controller.
    }
    \label{fig:pic2}
\end{figure}
The training time for the other algorithms is insufficient to find the optimal control parameters.
This is unavoidable in the hardware setup due to battery constraints.
For the experiments with the altitude change (3DTV), we observe that ETSO outperforms SafeOpt, picking the best performing controller at the last time step.
In AC$_{.40}$, SafeOpt performs worse than in simulation.
Furthermore, we recorded two crashes out of the $10$ different rounds for SafeOpt.
This highlights that in practice iterations with the quadcopter are not fully \iid in the real world.
As a remedy, we included a buffer time in between the rounds, but some influence of the previous control parameters in the next iterations was inevitable.
ETSO and SafeOpt$_\infty$ yield similar performance, however, for SafeOpt$_\infty$ we recorded two out of 10 crashes.
Using ETSO, we recorded no crashes, highlighting that our event trigger increases safety in practice by resetting the safe set in presence of significant changes.
In Figure~\ref{fig:hw_video}, we showed a comparative example of the performance of ETSO and $\textrm{SafeOpt}_\infty$.
In the video, SafeOpt$_\infty$ fails two time steps after the change, while ETSO manages to reset to the backup controller and learn a better controller for the changed system.
\section{Concluding Remarks}
Event-Triggered SafeOpt (ETSO) is an algorithm that combines the concepts of SafeOpt with an event trigger as a first step toward safety in time-varying environments with only access to the cost signal. In our software and hardware experiments with a quadcopter, we find that:
\begin{compactenum}[\itshape(i)]
    \item ETSO only resets for significant changes. Without significant changes, ETSO performs equivalently to SafeOpt and SafeOpt$_\infty$ and better than the safe backup controller. 
    \item After detecting significant changes, ETSO outperforms the other baselines.
    \item If changes cause the safe set to shift considerably, using SafeOpt or SafeOpt$_\infty$ can lead to failures during optimization, whereas ETSO avoids such failures by resetting the safe set. 
\end{compactenum}
%
For some systems, it might not be possible to fulfill Assumption~\ref{ass:imm_fail}.
Here, using crash prevention strategies as proposed in GoSafe-variants~\citep{baumann2021gosafe,sukhija2023gosafeopt} can help.

\acks{We thank A. Gräfe and S. Giedyk for their help with the hardware setup, and H. Hose for insightful discussions on the experiment design.
We further thank A. von Rohr and J. Menn for helpful comments on the manuscript.
This work is partially funded by the Deutsche Forschungsgemeinschaft 
(DFG, German Research Foundation) – RTG 2236/2 (UnRAVeL) and DFG Priority Program SPP~2422 (TR 1433/3-1).} 

\bibliography{references}

\end{document}